\begin{document}

\title{ Spectrum Allocation for ICIC Based Picocell}
\author{Suman Kumar \hspace*{.5in} Sheetal Kalyani \hspace*{.5in} K. Giridhar  \\
\hspace{0in} Dept. of Electrical Engineering \\
{\tt \{ee10d040,skalyani,giri\}@ee.iitm.ac.in}\\
}
\maketitle
\begin{abstract}
In this work, we analytically  study the impact of spectrum allocation scheme in picocells on the coverage probability (CP) of the Pico User (PU), when the macro base stations (MBSs) employ either fractional frequency reuse (FFR) or soft frequency reuse (SFR).  Assuming a  fixed size for the picocell, the CP expression is derived for a PU present in   either  a FFR or SFR based deployment, and when the PU uses either the centre or the edge frequency resources.  Based on these expressions, we propose two possible frequency allocation schemes for the picocell when FFR is employed by the macrocell. The CP and the average rate expressions for both these schemes are derived, and it is  shown that these schemes outperform the conventional scheme where no inter-cell interference coordination (ICIC) is assumed. The impact of both schemes on the macro-user  performance is also analysed. When SFR is used by the MBS, it is shown that the CP is maximized when the PU uses  the same frequency resources as used by the centre region. 
\end{abstract}
\section{Introduction}
Picocells are low-powered operator-deployed base stations, which are primarily used to improve the coverage of hot spots and cell edge \cite{1184120, 1308951}.  They enhance  the frequency reuse in the system, and thereby increasing the available per-user bandwidth. 
However,  picocell deployment and planning can pose many challenges. The high cost of licensed spectrum may force the operator to allocate the same frequency to the low-power picocell and the macrocell, leading to significant co-channel interference in such heterogeneous networks. For homogeneous networks (comprising of only macrocells),  inter-cell interference coordination (ICIC) strategies such as FFR and SFR have been  proposed to mitigate the impact of co-channel interference. Recently, both FFR and SFR schemes have been included in fourth generation wireless standards such as WiMAX  802.16m and 3GPP-LTE  release 8 \cite{5534591}. Our focus here is on deciding what frequency resources should be allocated to the picocell when the macrocell employs ICIC schemes (FFR and SFR).

Spectrum allocation in such tier networks  has been extensively studied in literature. A fast and effective power control algorithm  and a suboptimal allocation algorithm are proposed in \cite{6542983} where PU and MU both share the same spectrum. A pico location optimization method is presented in \cite{6362538} for spectrum allocation strategies. A new cell selection method based on the resource specific Signal-to-Interference-plus-noise-Ratio (SINR) value is discussed in \cite{6092863} where both PU and MU  share the same spectrum. An ICIC scheme based on avoiding the primary interfering source is described in \cite{6398987}.  All these contributions relate to the spectrum allocation to the picocell, but do not consider an ICIC scheme (FFR or SFR) for the macrocell. A FFRopa (FFR with frequency occupation ordering and power adaptation) scheme for the macrocell with various frequency allocation schemes for the picocell is proposed in \cite{6666396} where through simulation it is shown that this method outperforms FFR. Spectrum allocation in the femtocell is studied in \cite{6476610, suman, 5734795}

Heterogeneous networks have been extensively studied in \cite{6171996,6220221, 6171998, 6171997, robert} and references therein using analytically tractable models.  Picocells are mostly deployed to improve the CP of hot spots and/or cell edge. Hence, the network operator will be interested to know the improvement in the performance for a given area  which could be a hot spot or a cell edge location. Therefore, the model and the analysis in    \cite{robert}  is  well suited for the picocell analysis, and we assume the same model.

The contribution of this work is to analytically  derive the CP and rate of the PU for two specific FFR schemes applicable for these Hetnets. In order to do that, we first distinguish five different frequency resources usage scenarios in the picocell, namely: $(a)$ Centre frequency resources of FFR scheme employed in macrocell,  $(b)$ Edge frequency resources of FFR scheme, $(c)$ Neighbouring edge frequency resources of FFR scheme,  $(d)$ Centre frequency resources of SFR scheme, and  $(e)$ Edge frequency resources of SFR scheme. These different frequency resources are shown in Fig. \ref{fig:fig0}. We propose two frequency allocation schemes for the picocell when FFR is employed in the macrocell, where the PUs are segregated  into categories, based on Signal-to-Interference-Ratio (SIR) threshold $S_{tp}$ namely: cell-centre PUs and cell-edge PUs. The proposed scheme $1$ allocates centre frequency resources of FFR scheme to the cell-centre PUs and neighbouring edge frequency resources of FFR scheme to the cell-edge PUs, while scheme $2$ allocates edge frequency resources  to the cell-centre PUs, and the neighbouring edge frequency resources of FFR scheme to the cell-edge PUs. We show that the  CP for both the schemes is higher than the conventional scheme\footnote{ In the conventional scheme, no ICIC is assumed,  and the macrocell and picocell  use the same spectrum simultaneously, i.e.,  unity frequency reuse is used in both the regions.}. Also, depending on the SIR threshold $S_{tm}$ of macrocell, either Scheme $1$ or Scheme $2$ may be preferred to deliver a higher average rate when compared to the conventional scheme. Further, we  analyse the impact of proposed schemes on the CP of MU. We also show that in SFR deployment, the CP is maximized when the PU uses the frequency resources of the centre.
\begin{figure}[ht]
\centering
\begin{tikzpicture}
\draw[fill=white!10] [line width=.01cm](2,2) circle (1);
\draw[fill=white!10] [line width=.01cm](2,2) circle (0.5);
\path[draw] (1.85,1.9) -- (2.15,1.9)--(2,2.2)--cycle;
\node at (1.8,1.8){\scriptsize $F_0$};
\draw[fill=white!10] [line width=.01cm](2,4) circle (1);
\draw[fill=white!10] [line width=.01cm](2,4) circle (0.5);
\node at (1.8,3.8){\scriptsize $F_0$};
\path[draw] (1.85,3.9) -- (2.15,3.9)--(2,4.2)--cycle;
\draw[fill=white!10] [line width=.01cm](3.75,3) circle (1);
\draw[fill=white!10] [line width=.01cm](3.75,3) circle (0.5);
\node at (3.55,2.8){\scriptsize $F_0$};
\path[draw] (3.6,2.9) -- (3.85,2.9)--(3.75,3.2)--cycle;
\node at (2.75,2){\scriptsize $F_1$};
\node at (2.75,4){\scriptsize $F_2$};
\node at (4.5,3){\scriptsize $F_3$};
\draw[->] (2.75,4.1)--(5,5) node[pos=0.8,above] {\scriptsize Neighbouring Edge frequency resources};
\draw[->] (4.5,3.1)--(5.1,5) node[pos=1,above] {};
\draw[->] (2.75,1.9)--(3.5,1.5) node[pos=1,below] {\scriptsize  Edge frequency resources};
\draw[->] (1.7,1.95)--(0.5,3) node[pos=1,above] {\scriptsize  Center frequency resources};
\draw[fill=white!10] [line width=.01cm](7,3) circle (1.5);
\draw[fill=white!10] [line width=.01cm](7,3) circle (0.75);
\path[draw] (6.85,2.9) -- (7.15,2.9)--(7,3.15)--cycle;
\node at (6.8,3.3){\scriptsize $F_2 \& F_3$};
\node at (8,3){\scriptsize $F_1$};
\draw[->] (8,2.9)--(9,2) node[pos=1,below] {\scriptsize  Edge frequency resources};
\draw[->] (7,3.45)--(8,5) node[pos=1,above] {\scriptsize  Center frequency resources};
\draw[->] (2,4)--(1,5) node[pos=0.9,above] {\scriptsize MBS};
\node at (2,6){ FFR};
\node at (7,6){SFR};
\draw[->] (0.5,0.5)--(1.3,1.3) node[pos=0,below] {\scriptsize Reference Cell};
\end{tikzpicture}
\caption{Frequency resources allocation in macrocell in FFR and SFR}
 \label{fig:fig0}
        \end{figure}
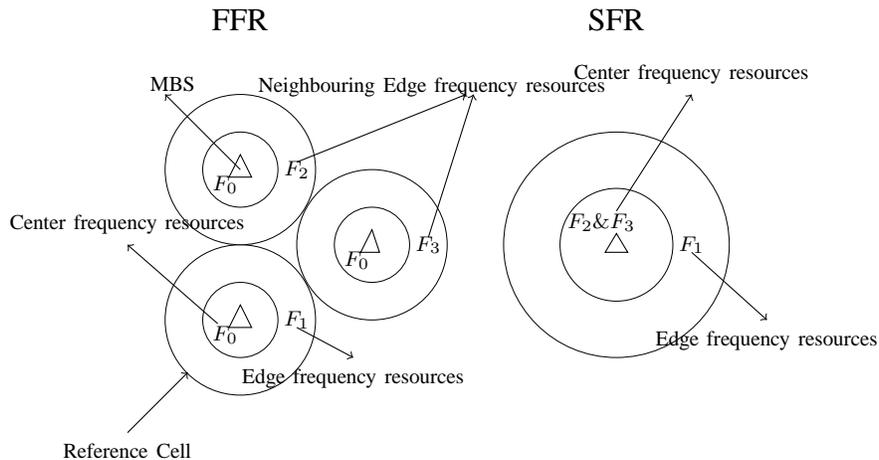
\section{Downlink System Model}
We consider a hybrid model similar to the one proposed in \cite{robert}, where a picocell with a fixed network area, i.e., a circle with radius $R$, is considered.  The locations of MBSs and other pico base-stations (PBSs) outside of the fixed network area are modelled according to a Poison Point Process (PPP) \cite{robert}. Similar to \cite{robert}, a guard region of radius $R_g$ from the cell-edge is imposed around the fixed cell so that no interfering base-stations (BSs) are assumed to be inside this guard region.
Since we are primarily interested in the performance of a typical PU,  the following model for spatial location of nodes is considered. A fixed picocell at the origin is assumed, and  the locations of the MBSs are modelled by a spatial PPP on $B(o,R+R_g)^c = \mathbb{R}^2\setminus B(o,R+R_g)$ of density $\lambda_m$, and is denoted by $\phi$.  The PBSs are modelled by another  PPP of density $\lambda_p$ on $B(o,R+R_g)^c$  and is denoted by $\psi$. We also assume that $\phi$ is independent of $\psi$.

A standard path loss model $x^{-\alpha}$ with $\alpha\geq 2$ is used. Channel fading power between any two nodes is assumed to be exponential, and is independent across nodes. The MBSs and PBSs  transmit   at a fixed power of $P_m$ and $P_p$,  respectively. The network is assumed to be interference limited, and hence noise power is taken to be zero. We  assume that PUs will be associated with a PBS, if they are within the  network area of the desired PBS. All the users outside the PBS network  area are assumed to be MUs and they associate to the nearest MBS. Hence, the network area of the macrocell consist of Voronoi regions.
The SIR of the PU, which is at distance $r$ from the associated PBS may be written as:
\begin{equation} 
\eta_p(r)=\frac{P_p r^{-\alpha}{g_p}}{I_{\phi}+ I_{\psi\setminus P_0}}, I_{\phi}=\sum\limits_{i\in \phi}P_m{d_i}^{-\alpha}{h_i} \text{ and } I_{\psi\setminus P_0}=\sum\limits_{j\in (\psi \setminus P_0)}P_p{r_j}^{-\alpha}{g_j}
\label{eq:two}
\end{equation}
where $\phi$ denotes the set of all MBS, and $\psi\setminus P_0$ denotes the set of all PBS except the serving PBS ($P_0$). The distance from the  $j^{th}$  interfering PBS and $i^{th}$ interfering  MBS to the typical PU are denoted by  $r_j$ and $d_i$, respectively. The channel fading power from the $j^{th}$ interfering PBS and the $i^{th}$ interfering  MBS are denoted by $g_j$ and $h_i$, respectively, and $g_p$ denotes the fading gain from the typical PU to PBS. Note that $g_p$, $g_i$ and $h_i$ are all independent and identically exponentially distributed with unit mean, i.e, $\exp(1)$. Similarly,  the SIR of the MU, which is at distance $r$ from the associated MBS may be written as:
\begin{equation} 
\eta_m(r)=\frac{P_m r^{-\alpha}{g_m}}{I_{\phi\setminus M_0}+I_{\psi}},
\label{eq:three}
\end{equation}
where $g_m$ denotes the fading gain from the typical MU to MBS, which is exponentially distributed with unit mean. Here $I_{\phi\setminus M_0}$ denotes the interfering power from all the MBSs except the nearest MBS, and $I_{\psi}$ denotes the interfering power from all the PBSs.

Both FFR and SFR   classify users into cell-centre MUs and cell-edge MUs, based on a SIR threshold denoted by $S_{tm}$. Users having SIR\footnote{Here, the SIR in the context of an OFDM system would typically be the wideband SIR averaged over all sub carriers, in the current frame. Overtime, as the channel and the pathloss parameters change, it is possible that a cell-edge user gets reclassified as a cell-centre user, or vice-versa.} higher than $S_{tm}$ are classified as cell-centre MUs, and otherwise they are denoted as cell-edge users. FFR uses frequency reuse $\frac{1}{\delta}$ for the cell-edge MUs to boost up the SIR while providing unity frequency reuse to the cell-centre MUs. In other words, FFR need a total of $\delta+1$ sub-bands. One sub-band is used for the cell-centre users and one among $\{1,\cdots, \delta \}$ is chosen with equal probability which is used for the cell-edge users. However, SFR employs a reuse factor of $\frac{1}{\delta}$ on the cell-edge, and utilizes the neighbouring frequency resources at the cell-centre as shown in Fig. \ref{fig:fig0}. SFR also uses $\beta>1$ times higher transmit power for cell-edge MUs when compared to cell-centre MUs to enhance the performance of cell edge users. 
\section{Coverage Probability}
We consider a fixed picocell at the origin and compute the probability that a user within its cell is in coverage, i.e.,  the SIR of the user is greater than  $T$. In other words, CP for the PU is  defined as  $ P_c = \mathbb{P}[\text{SIR}>T].$
To find out the CP for the PU, we need to know the probability density function (pdf) of $r$, 
where $r$ is the distance between typical
PU to the tagged PBS. Assuming that the user distribution is uniform, the probability that a PU will be at distance $r$, given a radius $R$ of the picocell is given by $\frac{2r}{R^2}$.
Therefore, the pdf of $r$ is,
\begin{equation}
f_R(r)= \left\{
\begin{array}{rl}
&\frac{2r}{R^2}, r\leqslant R\\
&0,  r>R.
\end{array} \right.
\end{equation}
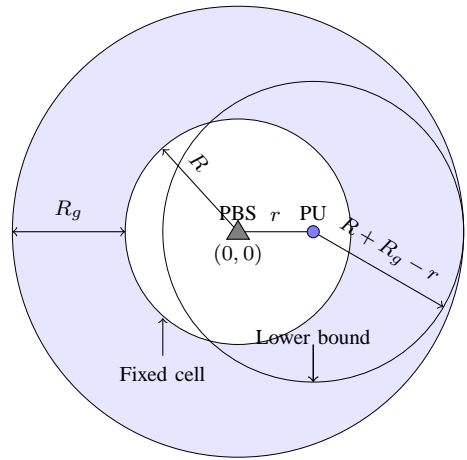
\begin{figure}[ht]
\centering
\begin{tikzpicture}
\draw[fill=blue!10] [line width=.01cm](5,5) circle (3);
\draw[fill=white!10] [line width=.01cm](5,5) circle (1.5);
\draw [line width=.01cm](6,5) circle (2);
\draw[-] (5,5)--(6,5) node[pos=.5,sloped,above] {\scriptsize$r$};
\node at (5,5.25){\scriptsize PBS};
\node at (6,5.25){\scriptsize PU};
\draw[->] (6,5)--(7.73,4) node[pos=.5,sloped,above] {\scriptsize$R+R_g-r$};
\draw[->] (5,5)--(4,6.1) node[pos=.7,sloped,above] {\scriptsize$R$};
\draw[<->] (2,5)--(3.5,5) node[pos=.5,sloped,above] {\scriptsize$R_g$};
\node at (6,3.6){\scriptsize Lower bound};
\draw [black,line width=.02  cm] [->](6,3.5)--(6,3);
\draw[->](4,3.35)--(4,3.85) node[pos=0,below] {\scriptsize Fixed cell};
\node at (5,4.7) {\scriptsize $(0,0)$};
\path[draw, fill=black!50] (4.85,4.9) -- (5.15,4.9)--(5,5.15)--cycle;
\draw[fill=blue!50] [line width=.01cm](6,5) circle (0.08);
\end{tikzpicture}
\caption{Network coverage area of the PBS that is considered for the lower bound. The circle centred around the PBS with radius $R$ is the original network area.}
\label{fig:fig00}
        \end{figure}
The distance between the considered PU (which is a distance $r$ from the PBS) and the interfering base stations can vary from $R+R_g-r$ to $R+R_g+r$. Deriving the exact expression  is difficult since the user is not exactly centred at the origin \cite{robert}. Therefore, similar to \cite{robert}, we consider a picocell region of radius $R+R_g-r$, i.e., a ball of radius $R+R_g-r$ around a  PU, that gives an upper bound on the interference
power as shown in Fig. \ref{fig:fig00}. The suitable  length for the guard region is discussed in \cite{robert} where it is shown that guard region radius corresponding to the picocell should be $R_g^{(p)}=R$ and that corresponding to the macrocell be $R_g^{(m)}=(\frac{P_m}{P_p})^\frac{1}{\alpha}R$. It has been assumed in \cite{robert} that interference has two components: one interferer at the boundary of the guard region, and another interferers outside the guard region. We assume a sparse density for the picocell, and also the coverage radius of the picocell is considered to be  small\footnote{In heterogeneous network, considered for 3GPP-LTE, picocell can use a range extension factor (bias in cell association SIR) which can greatly increase the coverage radius of the PBS. In our work we do not assume any such range expansion scheme at the PBS.}. In other words, we assume that $\lambda_m<\lambda_p\ll \frac{1}{\pi (R)^2}$. Intuitively, in such condition, there is very little chance that a  dominant interferer to be located at the boundary of the guard  region.  Hence, we do not consider any interferer at the boundary of the guard edge region. We also observed  that without considering an interferer at the boundary of the guard region, the analytical result matches better with the simulation results.

 We now derive the CP of a typical PU when the macrocell employs FFR for the above deployment model.
\newtheorem{theorem}{Lemma}
 \begin{theorem}
The CP of a typical PU, using the centre frequency resources band $F_0$ of the FFR  in an interference limited scenario is given by 
\begin{equation}
P_{f,c}(T)= \int_{r>0}^{R}\exp(-\pi r^2 T^{2/\alpha}K(r)) \frac{2r}{R^2}\mathrm{d}r,
\label{eq:four}
\end{equation}
where
\begin{equation}
 K(r)=\left(\frac{P_m}{P_p}\right)^{2/\alpha}C(L_m(r),\alpha)\lambda_m + C(L_p(r),\alpha)\lambda_p.
\label{eq:lemma4}
\end{equation}
\begin{equation*}
C(L_p(r),\alpha)=\frac{{}_2 F_1(1,\frac{\alpha-2}{\alpha},2-\frac{2}{\alpha},-\{L_p(r)\}^\alpha)}{\alpha-2}  \{L_p(r)\}^{2-\alpha} \text{ }\mathrm{ and } \text{ } L_p(r)= \frac{R+R_g^{(p)}-r}{r}(T)^{-\frac{1}{\alpha}}.
\label{hyper1}
\end{equation*}
\end{theorem}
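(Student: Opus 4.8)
The plan is to condition on the PU--PBS distance $r$ and average at the end against the density $\frac{2r}{R^2}$ on $[0,R]$. Because the centre band $F_0$ is used with unity frequency reuse, every MBS in $\phi$ and every PBS in $\psi\setminus P_0$ contributes interference. Conditioned on $r$, I would write the conditional CP as $\mathbb{P}[\eta_p(r)>T\mid r]=\mathbb{P}\big[g_p> T r^{\alpha}(I_\phi+I_{\psi\setminus P_0})/P_p\big]$, and then exploit the fact that $g_p\sim\exp(1)$: taking the expectation over $g_p$ turns this into $\mathbb{E}\big[\exp(-\tfrac{Tr^\alpha}{P_p}(I_\phi+I_{\psi\setminus P_0}))\big]$, i.e.\ the joint Laplace transform of the total interference evaluated at $s=\tfrac{Tr^\alpha}{P_p}$. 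Since $\phi$ and $\psi$ are independent PPPs and the two interference sums are independent, this factorises as $\mathcal{L}_{I_\phi}(s)\,\mathcal{L}_{I_{\psi\setminus P_0}}(s)$.

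Next I would evaluate each Laplace transform with the probability generating functional of a PPP. For the PBS term, $\mathcal{L}_{I_{\psi\setminus P_0}}(s)=\exp\!\big(-2\pi\lambda_p\int_{d_p}^{\infty}(1-\mathbb{E}_g[e^{-sP_p x^{-\alpha}g}])\,x\,\mathrm{d}x\big)$, where the lower limit $d_p=R+R_g^{(p)}-r$ is exactly the radius of the reduced ball used in Fig.~\ref{fig:fig00} to upper-bound the interference, so that all interferers lie at distance at least $d_p$ from the PU. Averaging the exponential fading gives $\mathbb{E}_g[e^{-sP_p x^{-\alpha}g}]=\frac{1}{1+sP_p x^{-\alpha}}$, and the same computation with $P_m$, $\lambda_m$ and $d_m=R+R_g^{(m)}-r$ handles the MBS term. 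This reduces both transforms to radial integrals of the form $\int_{d}^{\infty}\frac{x}{1+x^\alpha/(sP)}\,\mathrm{d}x$.

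The key step is to bring these integrals into closed form. Substituting $u=x/(sP)^{1/\alpha}$ rescales the lower limit to $d/(sP)^{1/\alpha}$; with $s=Tr^\alpha/P_p$ this produces precisely $L_p(r)=\frac{R+R_g^{(p)}-r}{r}T^{-1/\alpha}$ for the PBS term and its analogue $L_m(r)$ for the MBS term, while the Jacobian contributes the prefactor $r^2T^{2/\alpha}$ (and, for the macro term, the power ratio $(P_m/P_p)^{2/\alpha}$ arising from $(sP_m)^{2/\alpha}$). Each integral then collapses to the tail integral $\int_{L}^{\infty}\frac{u}{1+u^\alpha}\,\mathrm{d}u$, which I would evaluate as a Gauss hypergeometric function to identify it with $\tfrac12 C(L,\alpha)$, the factor $\tfrac12$ merging the $2\pi$ into the $\pi$ appearing in \eqref{eq:four}. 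Collecting the two exponents yields $\exp(-\pi r^2T^{2/\alpha}K(r))$ with $K(r)$ as in \eqref{eq:lemma4}, and integrating against $\frac{2r}{R^2}$ over $[0,R]$ gives \eqref{eq:four}. The main obstacle is this last evaluation: verifying that the tail integral equals the stated ${}_2F_1$ expression and keeping the exponent and power-ratio bookkeeping straight for the two distinct guard radii $R_g^{(p)}=R$ and $R_g^{(m)}=(P_m/P_p)^{1/\alpha}R$, together with the convergence condition $\alpha>2$ that makes the tail integral finite.
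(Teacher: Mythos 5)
Your proposal is correct and follows essentially the same route as the paper's Appendix proof: conditioning on $r$, using the exponential fading of $g_p$ to express the conditional CP as a product of Laplace transforms of the independent macro and pico interference fields, evaluating each via the PPP probability generating functional with lower limit $R+R_g^{(\cdot)}-r$, and rescaling to reduce the radial integral to $\int_{L}^{\infty}\frac{u}{1+u^{\alpha}}\,\mathrm{d}u=\tfrac12 C(L,\alpha)$. The bookkeeping you flag (the $(P_m/P_p)^{2/\alpha}$ prefactor, the two guard radii, and the $\alpha>2$ convergence condition) matches the paper's treatment.
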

\begin{proof}
 The proof is provided in the Appendix.
\end{proof}
Here, the first term in $K(r)$ is due the macrocell and second term is due to the interfering picocell. The expression derived in Lemma $1$ is equivalent to the  CP of PU in a conventional scheme in which reuse one PBS and reuse one MBS both use the same frequency resource. Now, we consider the case when edge frequency resources of macrocell are used by PU. The CP of a typical PU, using the edge frequency resources band $F_1$ of the FFR follows from Lemma $1$ with a slight modification in the  interference due to macrocell. Since FFR uses  frequency reuse $\frac{1}{\delta}$ for the cell edge users (i.e., it reuses a frequency sub-band from $1, \cdots, \delta$ with equal probability), the MBS interferers density  will be ``thinned version'' of the original MBS interferers density seen by the PU \cite{6042301}. In other words, the MBSs interferer density will be $\frac{\lambda_m}{\delta}$ instead of $\lambda_m$.
Hence, the  CP of a typical PU, using the edge frequency resources band $F_1$ of the FFR in an interference limited scenario is given by
\begin{equation}
P_{f,e}(T)= \int_{r>0}^{R}\exp(-\pi r^2 T^{2/\alpha}\hat{K}(r)) \frac{2r}{R^2}\mathrm{d}r,
\label{eq:theorem1} 
\end{equation}
where 
\begin{equation}
\hat{K}(r)=\left(\frac{P_m}{P_p}\right)^{2/\alpha}C(L_m(r),\alpha)\frac{\lambda_m}{\delta} +C(L_f(r),\alpha)\lambda_p.
\label{eq:ffr}
\end{equation}
Now we will derive the CP of PU when  MBS uses FFR and PUs use the edge frequency resources of neighbouring cells. Here we assume that picocell  connects to the nearest MBS via the back-haul. The distance between PU to the nearest macrocell is $q$.  Using null probability, the pdf of $q$ can be written as \cite{6042301} 
\begin{equation}
f_q(Q)=2\pi\lambda q e^{-\lambda \pi (q^2-(R_g^{(m)}+R-r)^2)},\text{ } \mathrm{ where } \text{ } q\geq R_g^{(m)}+R-r
\end{equation}
 \begin{theorem}
The  CP of a typical PU, using the edge frequency resources band $F_2$ and $F_3$ of the FFR (edge frequency resources band of the neighbouring macrocell) in an interference limited scenario is given by
\begin{equation}
\textstyle
P_{f,\hat{e}}(T)=\int_{r>0}^{R}\exp\{-\pi r^2 (T )^{2/\alpha}C(L_p(r),\alpha)\lambda_p \}\mathbb{E}_q\left[\exp\{-\pi r^2 \left(\frac{TP_m}{P_p}\right)^{2/\alpha}C(L_m(r,q),\alpha)\frac{\lambda_m}{\delta} \}\right]\frac{2r}{R^2}\mathrm{d}r\label{eq:eight}
\end{equation}
\end{theorem}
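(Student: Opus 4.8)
The plan is to follow exactly the Laplace-transform route that establishes Lemma 1 (carried out in the Appendix), and to modify only the macrocell interference factor so as to encode two effects: the frequency-reuse thinning by $1/\delta$, and the exclusion of the serving (nearest) MBS, whose distance $q$ is random and must therefore be carried as an outer expectation. First I would condition on the PU-to-PBS distance $r$ and write the coverage event using the $\exp(1)$ fading of $g_p$,
\begin{equation*}
\mathbb{P}[\eta_p(r)>T]=\mathbb{P}\!\left[g_p>\frac{Tr^{\alpha}}{P_p}\bigl(I_{\phi}+I_{\psi\setminus P_0}\bigr)\right]=\mathbb{E}\!\left[e^{-sI_\phi}\right]\mathbb{E}\!\left[e^{-sI_{\psi\setminus P_0}}\right],\qquad s=\frac{Tr^{\alpha}}{P_p},
\end{equation*}
where the factorization uses the assumed independence of $\phi$ and $\psi$.

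The picocell factor $\mathbb{E}[e^{-sI_{\psi\setminus P_0}}]$ is untouched by the macro FFR band choice, so it reproduces the first factor of \eqref{eq:eight}, namely $\exp\{-\pi r^2 T^{2/\alpha}C(L_p(r),\alpha)\lambda_p\}$, directly from the Lemma 1 computation. The novelty is confined to the macrocell factor. Here I would condition on $q$, the distance to the nearest MBS (the one the picocell connects to via backhaul). Since that MBS uses band $F_1$ while the PU occupies $F_2,F_3$, it produces no interference; the interfering MBSs are those of $\phi$ lying in $\{d>q\}$, and applying the FFR thinning retains only a fraction $1/\delta$, giving an interfering PPP of density $\lambda_m/\delta$ on $\{d>q\}$. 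The probability generating functional of the PPP together with the $\exp(1)$ fading of the $h_i$ then yields
\begin{equation*}
\mathbb{E}\!\left[e^{-sI_\phi}\mid q\right]=\exp\!\left\{-\frac{\lambda_m}{\delta}\int_q^{\infty}\!\Bigl(1-\frac{1}{1+sP_m v^{-\alpha}}\Bigr)2\pi v\,\mathrm{d}v\right\}.
\end{equation*}

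After the normalizing substitution $v=r(TP_m/P_p)^{1/\alpha}u$, the integrand becomes $u/(1+u^{\alpha})$, the prefactor $\pi r^2(TP_m/P_p)^{2/\alpha}$ is pulled out, and the lower limit maps to $L_m(r,q)=\tfrac{q}{r}(TP_m/P_p)^{-1/\alpha}$, so the remaining integral collapses into the Gauss-hypergeometric form $C(L_m(r,q),\alpha)$; the conditional factor is thus $\exp\{-\pi r^2(TP_m/P_p)^{2/\alpha}C(L_m(r,q),\alpha)\lambda_m/\delta\}$. Finally I would de-condition by taking $\mathbb{E}_q[\cdot]$ against the null-probability density $f_q$, producing the bracketed expectation of \eqref{eq:eight}, and then integrate the full conditional coverage probability against $f_R(r)=2r/R^2$ over $r\in(0,R]$ to obtain the claimed formula.

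The main obstacle I anticipate is the correct treatment of the macrocell factor: unlike Lemma 1, where the interference integral starts at the fixed guard radius, here the lower limit is the \emph{random} distance $q$ to the excluded serving MBS, so the $q$-dependence must be kept outside the exponential and cannot be averaged into the interference before evaluating the Laplace transform. Justifying this nested $\mathbb{E}_q[\cdot]$ structure is exactly the delicate point—it follows from conditioning on $q$ first, using that, given $q$, the residual point process on $\{d>q\}$ is a PPP of density $\lambda_m/\delta$ independent of the fading, and only then de-conditioning. Everything else reduces to the routine substitution that identifies the integral with $C(L_m(r,q),\alpha)$.
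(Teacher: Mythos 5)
Your proposal is correct and follows essentially the same route as the paper: it reuses the Lemma 1 Laplace-transform machinery, keeps the picocell factor unchanged, and modifies only the macrocell factor by thinning the density to $\lambda_m/\delta$, moving the lower integration limit to the random distance $q$ of the excluded nearest MBS, and then de-conditioning with $\mathbb{E}_q[\cdot]$ against the null-probability density. Your explicit justification of the nested $\mathbb{E}_q$ structure is slightly more careful than the paper's sketch, but the argument is the same.
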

\begin{proof}
The proof is similar to that outlined for Lemma $1$, except for the fact that there will be no interference from the nearest MBSs, and the density of MBSs interference will be thinned by $\delta$.
The Laplace transform of interference due to MBS  when PU uses the neighbouring edge frequency resources ($ \mathcal{L}_{\hat{I}_{\phi\setminus M_0}}(s)$) needs to be derived,  and is given by
 \begin{eqnarray}
   && \mathcal{L}_{\hat{I}_{\phi\setminus M_0}}(s)=\mathbb{E}_{\phi,h_i}[\exp(-s\sum_{i \in \phi \slash M_0} h_i d_i^{-\alpha})] \nonumber\\
&=&\left[\exp\left(-2\pi\frac{\lambda_m}{\delta}\int_{q}^\infty \frac{sx^{-\alpha}}{1+sx^{-\alpha}}x\text{d}x\right )\right]\label{lower_limit}.
\end{eqnarray}
The lower limit of  the integration in \eqref{lower_limit} is $q$ due to the fact that all the interfering MBSs are at least a distance greater than $q$. Using a change of variable $t=s^{-\frac{1}{\alpha}}x$, $ \mathcal{L}_{\hat{I}_{\phi\setminus M_0}}(s)$ can now be simplified as
\begin{eqnarray}  
 \mathcal{L}_{\hat{I}_{\phi\setminus M_0}}(s)&=&\left[\exp\left(-2\pi \frac{\lambda_m}{\delta} s^{\frac{2}{\alpha}}\int_{L_m(q,r)}^\infty \frac{t}{1+t^{\alpha}}\text{d}t\right)\right]\nonumber
 \end{eqnarray}
where $L_m(q,r)= \frac{q}{r}\left(T \frac{P_m}{P_p}\right)^{-\frac{1}{\alpha}}$. Again, $ \mathcal{L}_{\hat{I}_{\phi\setminus M_0}}(s)$ can be further simplified as (see Appendix), 
\begin{eqnarray}  
 \mathcal{L}_{\hat{I}_{\phi\setminus M_0}}(s)     = \exp\{-\pi s^{2/\alpha}C(L_c(q,r),\alpha)\frac{\lambda_m}{\delta} \} \label{eq:seven}
 \end{eqnarray}
$\text{where }C(L_c(q,r),\alpha)=  \frac{{}_2 F_1(1,\frac{\alpha-2}{\alpha},2-\frac{2}{\alpha},-L_c(q,r)^\alpha)}{\alpha-2}L_c(q,r)^{2-\alpha}\nonumber$. 
\end{proof}

We now start with a discussion on the impact of FFR scheme on the CP of the PU where we consider $P_m=46$dBm, $P_p=30$dBm, $\lambda_m=0.385/Km^2$, $\lambda_p=1.155/Km^2$ $\delta=3$, $\alpha=4$, and $R=200$ meters. Fig. \ref{fig:fig1} shows the CP of PU when MBS uses FFR deployment. When the centre frequency resources of FFR are used in the picocell (this is the  conventional scheme) then it gives the lowest CP due to the fact that PU experiences  interference from all the MBSs. Using the edge frequency resources give a higher coverage than the centre frequency resource because of frequency reuse $\frac{1}{\delta}$. However, using the  neighbouring edge frequency resource gives the maximum CP although it uses frequency reuse $\frac{1}{\delta}$ due to the fact that there is no interference coming from the geographically nearest MBS.
\begin{figure}[ht]
 \centering
 \includegraphics[scale=0.32]{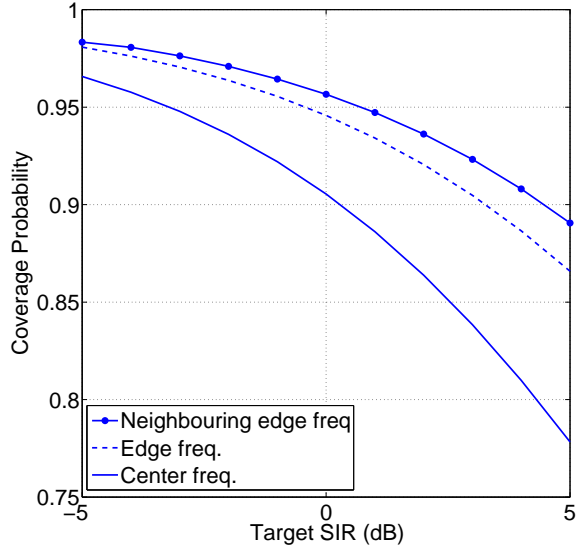}
 \caption{CP of pico user for FFR deployment in macrocell.}
 \label{fig:fig1}
\end{figure}
We would now like to analyse the impact of various FR allocation schemes in the picocell. In the next section, we propose two scheme for FR allocation and show their impact on  the CP of PU and MU.
\section{Proposed Frequency allocation schemes for PU}
We divide the PUs into two parts: cell-centre PUs and cell-edge PUs based on the SIR threshold ($S_{tp}$). The user seeing a  SIR $>S_{tp}$ is considered as cell-centre PU; else, the user is as a cell-edge PUs. 
\subsection{Proposed Scheme 1 for FFR}
In the proposed scheme 1 (PS1), considering the reference cell of FFR deployment in Fig.\ref{fig:fig0} we allocate frequency $F_0$ to the cell-centre PUs and frequency $F_2$ and $F_3$ to the cell-edge PUs. In other words, the centre frequency of macrocell will be used by cell-centre PUs, and neighbouring macro cell-edge frequencies would be used by cell-edge PUs.
Now, we derive the CP of PUs when  PS$1$ is used in the picocell. The  CP of a typical PU, when PU uses PS$1$ is given by
\begin{equation}
\textstyle
P_{ps1}=\int\limits_{0}^{R}\left(P[\eta_p(r)>T|\eta_p(r)>S_{tp}]P[\eta_p(r)>S_{tp}]+P[{\eta}^e_p(r)>T|\eta_p(r)<S_{tp}]P[\eta_p(r)<S_{tp}]\right)\frac{2r}{R^2}\text{d}r
\end{equation}
where ${\eta}^e_p(r)$ denotes the SIR experienced by PU when PU uses neighbouring cell-edge frequencies. Here, the first term  denotes the CP  due to cell-centre PUs, and the second term  denotes the CP due to cell-edge PUs.
Note that 
\begin{equation}
P[\eta_p(r)>T|\eta_p(r)>S_t]P[\eta_p(r)>S_t]\stackrel{(g)}{=}P[\eta_p(r)>\max\{S_{tp},T\}]\label{first_term}
\end{equation}
Here $(g)$ follows from the Bayes' rule. Since fading power is assumed to be independent across the sub-bands,  one obtains, 
\begin{equation}
P[{\eta}^e_p(r)>T|\eta_p(r)<S_{tp}]=P[{\eta}^e_p(r)>T]\label{second_term}
\end{equation}
Using \eqref{first_term} and \eqref{second_term}, $P_{Ps1}$ can be simplified as
\begin{equation}
P_{ps1}=\int\limits_{0}^{R}(P[\eta_p(r)>\max\{S_{tp},T\}]+P[{\eta}^e_p(r)>T]P[\eta_p(r)<S_{tp}])\frac{2r}{R^2}\text{d}r\label{ffr_pico1}
\end{equation}
$P[\eta_p(r)>T]$ has been derived in Lemma $1$, and is given by
\begin{equation}
P[\eta_p(r)>T]=\exp(-\pi r^2 T^{2/\alpha}K(r))\label{ffr_pico2}
\end{equation}
where $K(r)$ is defined in \eqref{eq:lemma4}. Also, $P[{\eta}^e_p(r)>T]$ is derived in Lemma $2$ and is given by
\begin{equation}
\textstyle
P[{\eta}^e_p(r)>T]=\exp\{-\pi r^2 (T )^{2/\alpha}C(L_p(r),\alpha)\lambda_p \}   \mathbb{E}_q\left[\exp\{-\pi r^2 \left(\frac{TP_m}{P_p}\right)^{2/\alpha}C(L_m(r,q),\alpha)\frac{\lambda_m}{\delta} \}\right]\label{ffr_pico3}
\end{equation}
Using \eqref{ffr_pico1}, \eqref{ffr_pico2} and \eqref{ffr_pico3}, $P_{ps1}$ can be evaluated. 
\subsection{Proposed Scheme 2 for FFR}
In the second scheme (PS2), again with reference to the FFR as depicted in Fig. \ref{fig:fig0}, we allocate frequency $F_1$ to the cell-centre PUs and frequency $F_2$ and $F_3$ to the cell-edge PUs. Thus, the edge frequency of macrocell will be used by cell-centre PUs and the neighbouring cell-edge frequencies would be used by cell-edge PUs. The CP of PS$2$ is directly follows from the CP of PS$1$ derived in the previous section except for the fact that instead of $K(r)$ in \eqref{ffr_pico2}, $\hat{K}(r)$ (given in \eqref{eq:ffr}) will be used since the edge frequency of FFR is used by the cell-centre PU. 

Now, our focus will be on the CP of cell-edge MUs when PS$2$ is used in picocell. First we derive the CP of cell-edge MUs when macrocell uses FFR and there is no picocell. Although it is already been  derived in \cite[Theorem 2]{6047548}, it will be seen that our derived expression is simpler to evaluate, and matches with the simulation result. The CP of cell-edge MUs at a distance $r$ from the BS is given by 
\begin{equation}
P_{ed}(r)=P[\hat{\eta}(r)>T|\eta(r)<S_{tm}].
\end{equation}
Here $\eta(r)$ denote the SIR experienced by MU when it uses centre frequency, and $\hat{\eta}(r)$ denotes the SIR experienced by MU when it uses cell-edge frequency and  the picocell is absent. Since fading power is assumed to be independent across the sub-bands,  
\begin{equation}
P[\hat{\eta}(r)>T|\eta(r)<S_{tm}]=P[\hat{\eta}(r)>T]\label{ffr_macro2}
\end{equation}
The CP of a typical cell-edge MU is then given by
\begin{equation}
P_{ed}=\frac{\int\limits_{0}^{\infty}P[\hat{\eta}(r)>T|\eta(r)<S_{tm}]P[\eta(r)<S_{tm}]f_R(r)\text{d}r}{\int\limits_{0}^{\infty}P[\eta(r)<S_{tm}]f_R(r)\text{d}r}\label{eq:mu}
\end{equation}
where $f_R(r)$ is the pdf of $r$ (which is the distance between user and the nearest MBS), and is given by  \cite{6042301} 
\begin{equation}
f_r(R)=2\pi\lambda re^{-\lambda \pi r^2},  r>0
\end{equation}
Using  \eqref{ffr_macro2}, $P_{ed}$ in \eqref{eq:mu} can be simplified as
\begin{equation}
P_{ed}=\frac{\int\limits_{0}^{\infty}P[\hat{\eta}(r)>T]P[\eta(r)<S_{tm}]f_R(r)\text{d}r}{\int\limits_{0}^{\infty}P[\eta(r)<S_{tm}]f_R(r)\text{d}r}\label{ffr_macro3}
\end{equation}
In \cite{6042301}, it has been shown that 
\begin{equation}
P[\eta(r)>T]=e^{-\pi\lambda r^2\rho(T,1)} \text{ and } P[\hat{\eta}(r)>T]
=e^{-\pi\lambda r^2\rho(T,\delta)}
\label{basic}
\end{equation}
where 
$ $
\begin{equation*}
\textstyle
\rho(T,\delta)=\frac{T^{\frac{2}{\alpha}}}{\delta}\int\limits_{T^{-2/\alpha}}^{\infty}\frac{1}{1+u^{\alpha/2}}\text{d}u
=\frac{2T}{\delta(\alpha-2)}{}_2 F_1(1,\frac{\alpha-2}{\alpha},2-\frac{2}{\alpha},-T)
\end{equation*}
Thus,  $P_{ed}$ can be further simplified to
\begin{equation}
P_{ed}=\frac{ 2\pi\lambda_m\int\limits_{0}^{\infty}re^{-\pi\lambda_m r^2}e^{-\pi\lambda_m r^2\rho(T,\delta)}(1-e^{-\pi\lambda_m r^2\rho( S_{tm},1)})\text{d}r}{ 2\pi\lambda_m\int\limits_{0}^{\infty}re^{-\pi\lambda_m r^2}(1-e^{-\pi\lambda_m r^2\rho( S_{tm},1)})\text{d}r}\label{ffr_macro4}
\end{equation}
Solving the integrals,  $P_{ed}$ can be rewritten as 
\begin{equation}
P_{ed}=\frac{\frac{1}{1+\rho(T,\delta)}-\frac{1}{1+\rho(T,\delta)+\rho( S_{tm},1)}}{1-\frac{1}{1+\rho( S_{tm},1)}}=\frac{1+\rho( S_{tm},1)}{(1+\rho(T, \delta))(1+\rho(T,\delta)+\rho( S_{tm},1))}
\end{equation}
Now, we derive the CP of cell-edge MUs when PS$2$ is used in the picocell, and the macrocell employs FFR. The CP of cell-edge MUs at a distance $r$ from the BS is given by 
\begin{equation}
P_{ed, ps2}(r)=P[\hat{\eta}_m(r)>T|\eta_m(r)<S_{tm}]
\end{equation}
and  $\hat{\eta}_m(r)$ denotes the SIR experienced by MU when it uses FR$\delta$, and the picocell uses the PS$2$. Since fading power is assumed to be independent, we have
\begin{equation}
P[\hat{\eta}_m(r)>T|\eta_m(r)<S_{tm}]=P[\hat{\eta}_m(r)>T]\label{ffr_macro5}
\end{equation}
The CP of a typical cell-edge MU is then given by
\begin{equation}
P_{ed, ps2}=\frac{\int\limits_{0}^{\infty}P[\hat{\eta}_m(r)>T]P[\eta_m(r)<S_{tm}]f_R(r)\text{d}r}{\int\limits_{0}^{\infty}P[\eta_m(r)<S_{tm}]f_R(r)\text{d}r}\label{ffr_macro6}
\end{equation}
Note that
\begin{equation*}
P[\hat{\eta}_m(r)>T]=\mathbb{P}\left[\frac{g_m r^{-\alpha}}{\hat{I}_{\phi\setminus M_0} +\frac{P_p}{P_m}I_\psi}>T\bigg|r\right],
\end{equation*}
\begin{eqnarray}
= \mathbb{P}\left[g_m>Tr^\alpha\left(\hat{I}_{\phi\setminus M_0} + \frac{P_p}{P_m}I_\psi\right)\right]
=\mathcal{L}_{\hat{I}_{\phi\setminus M_0}}(Tr^\alpha )\mathcal{L_{I_\psi}}\left(\frac{P_p}{P_m}Tr^\alpha\right),
\label{eq:lemma7}
\end{eqnarray}
We know from \eqref{basic} that $\mathcal{L}_{\hat{I}_{\phi\setminus M_0}}(s)=P[\hat{\eta}(r)>T]=e^{-\pi\lambda_m r^2\rho(T,\delta)}$, and
\begin{eqnarray}
&&\mathcal{L}_{I_\psi}(s)=\mathbb{E}_{\psi,g_j}\left[\exp\left(-s\sum_{j \in \psi } g_j r_j^{-\alpha}\right)\right] \nonumber\\
&\stackrel{(h)}=&\exp\left(-2\pi\lambda_p\int_{R}^\infty \frac{sx^{-\alpha}}{1+sx^{-\alpha}}x\text{d}x\right )\label{eq:upper1}\\  
&=&\exp\left(-2\pi \lambda_ps^{\frac{2}{\alpha}}\int_{L_m(r)}^\infty \frac{t}{1+t^{\alpha}}\text{d}t\right)\nonumber\\
&=& \exp\{-\pi s^{2/\alpha}C(L_m(r),\alpha)\lambda_p \} \label{eq:lemma6}
\end{eqnarray}
where $L_m(r)= \frac{R}{r}(T)^{-\frac{1}{\alpha}}$.
The lower limit of the  integral in $(h)$ is $R$ because of the fact that all the interfering PBS are at least a distance greater than $R$. Thus, the CP of a typical cell-edge MU is given by
\begin{equation}
P_{ed, ps2}=\frac{ 2\pi\lambda_m\int\limits_{0}^{\infty}e^{-\pi\lambda_m r^2}e^{-\pi\lambda_m r^2\rho(T,\delta)} \exp\{-\pi (\frac{P_p T}{P_m})^{\frac{2}{\alpha}}r^2C(L_m(r),\alpha)\lambda_p \}(1-e^{-\pi\lambda_m r^2\rho( S_{tm},1)})\text{d}r}{1-\frac{1}{1+\rho( S_{tm},1)}}\label{ffr_macro7}
\end{equation}
We now derive the CP of cell-centre MU. The CP of cell-centre MU without picocell is given in \cite[Theorem 3]{6047548}. The CP of cell-centre MU when PS$1$ is used in the picocell is given by
\begin{equation}
P_{cen,ps1}=\frac{\int\limits_{0}^{\infty}P[\eta_m(r)>\max\{S_{tm},T\}]f_R(r)}{\int\limits_{0}^{\infty}P[\eta_m(r)>T]f_R(r)}
\end{equation}
Note that
\begin{equation*}
P[\eta_m(r)>T]=\mathbb{P}\left[\frac{g_m r^{-\alpha}}{ I_{\phi\setminus M_0} +\frac{P_p}{P_m}I_\psi}>T\right]=\mathcal{L}_{I_{\phi\setminus M_0}}(Tr^\alpha )\mathcal{L_{I_\psi}}(\frac{P_p}{P_m}Tr^\alpha),
\label{eq:lemma5}
\end{equation*}
We know from \eqref{eq:lemma6},
$\mathcal{L_{I_\psi}}(\frac{P_p}{P_m}Tr^\alpha)=\exp\{-\pi (\frac{P_p T}{P_m})^{\frac{2}{\alpha}}r^2C(L_p(r),\alpha)\lambda_p \} $ and 
$\mathcal{L}_{I_{\phi\setminus M_0}}(Tr^\alpha )=P[\eta(r)>T]=e^{-\pi\lambda_m r^2\rho(T,1)}$ from \eqref{basic}, and hence CP of cell-centre MU can be derived. The CP of cell-centre MU does not change when PS$2$ is used in the picocell, since PS$2$ does not use centre frequency resources.
\subsection{Average Rate}
In this subsection, we derive average rate for both the schemes. The average  rate $R$ can be written as
\begin{equation*}
R=E[\ln(1+\text{SIR})]=\int\limits_{t>0}P[\ln(1+\text{SIR})>t]\text{d}t.
\end{equation*}
Using the fact that  $\ln(1+\text{SIR})$ is a monotonic increasing function for SIR, one obtains,
\begin{equation}
R=\int\limits_{t>0}P[\text{SIR}>e^t-1]\text{d}t.
\label{eq:rate}
\end{equation}
This is equivalent to computing CP for $T=e^t-1$ and integrating it over $t$. The CP for the PS$1$ is given in \eqref{ffr_pico1}, and thus the average rate of PU in PS$1$ is given by
\begin{equation}
R_{ps1}=\int\limits_{t>0}\left(\int\limits_{0}^{R}(P[\eta_p(r)>\max\{S_{tp},e^t-1\}]+P[\hat{\eta}_p(r)>e^t-1]P[\eta_p(r)<S_{tp}])\frac{2r}{R^2}\text{d}r\right)\text{d}t\label{rate_ps1}
\end{equation}
where $P[\eta_p(r)>T]$ and $P[\hat{\eta}_p(r)>T]$ are given in \eqref{ffr_pico2} and \eqref{ffr_pico3}, respectively. Similarly, the average rate of PU in PS$2$ can be derived. In order to choose $S_{tp}$, we define the sub-bands allocation to the MU and PU.  Here, based on the SIR threshold  sub-bands allocation can be done \cite{kumar2014} \cite{6047548}. In other words,
 \begin{equation}
N_c=P[\eta_m >S_{tm}]N_t \text{ and }N_e=\frac{N_t-N_c}{3}
\end{equation} 
where $N_t=F_0+F_1+F_2+F_3$, $N_c=F_0$, and $N_e=F_1=F_2=F_3$, denote the  total sub-bands, centre sub-bands, and  the edge sub-bands, respectively. Since PS$1$ allocates the centre frequency band of the macrocell to the cell-centre PUs,  $S_{tp}$ for the PS$1$ can be chosen such that
\begin{equation}
P_{f,c}(S_{tp})=\frac{N_c}{N_t}
\end{equation}
Here, $P_{f,c}(S_{tp})$ is derived in Lemma $1$, Similarly, the $S_{tp}$ for the PS$2$ can be chosen such that
\begin{equation}
P_{f,e}(S_{tp})=\frac{N_e}{N_t},
\end{equation}
where $P_{f,e}(S_{tp})$ is given by \eqref{eq:theorem1}. To compare the average rate for both the schemes, we define a normalized average rate which is average rate times the fraction of the total sub-bands used in that particular scheme. In other words, the normalized average rate of PS$1$ is $\frac{R_{ps1}(F_0+F_2+F_3)}{N_t}$, and similarly the normalized average rate of PS$2$ is $\frac{R_{ps2}(F_1+F_2+F_3)}{N_t}$.
\subsection{Analysis and Comparison of PS1 and PS2 for FFR}
In this subsection, we analyse the performance of PS$1$ and PS$2$ and compare both these schemes. Fig. \ref{fig:fig2} shows the CP of PU when picocell uses either PS$1$ or PS$2$, and it also plots the CP of the conventional scheme. Here again we take $P_m=46$dBm, $P_p=30$dBm, $\lambda_m=0.385/Km^2$, $\lambda_p=1.155/Km^2$ $\delta=3$, and $R=200$ meters It can be seen that  PS$2$ provides better CP than PS$1$, and both have a significantly better performance when compared to the conventional scheme.
\begin{figure}[ht]
 \centering
 \includegraphics[scale=0.32]{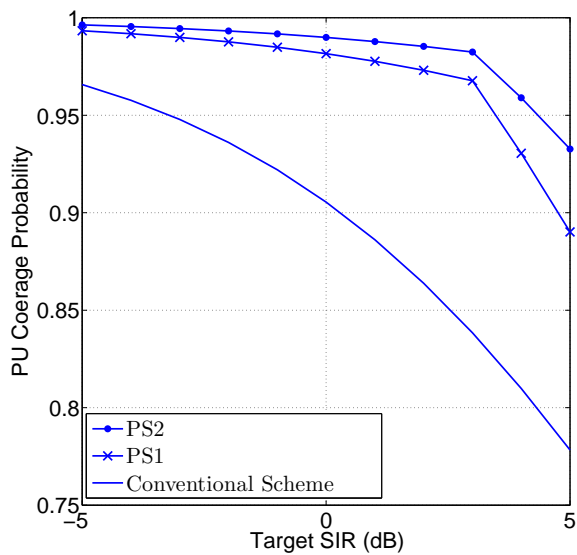}
 \caption{CP of pico user for the proposed schemes. Here $S_{tp}=3$dB and $\alpha=4$.}
 \label{fig:fig2}
\end{figure}

\begin{figure}[ht]
 \centering
 \includegraphics[scale=0.33]{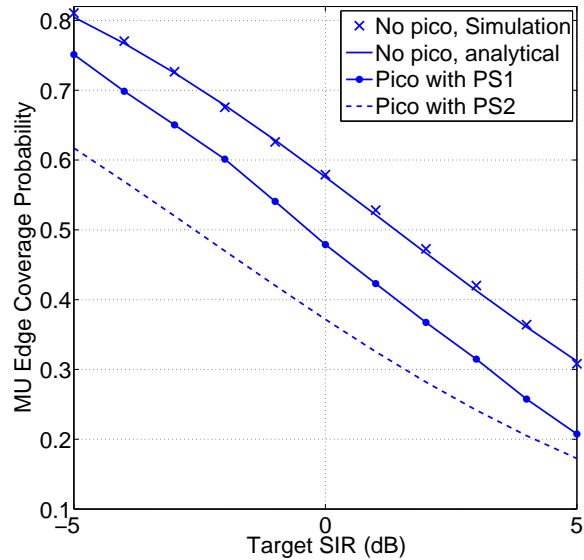}
 \caption{CP of cell-edge MUs where picocell are using different proposed schemes. Here $S_{tm}=0$dB and $\alpha=3.2$.}
 \label{fig:fig3}
\end{figure}
\begin{figure}[ht]
 \centering
 \includegraphics[scale=0.33]{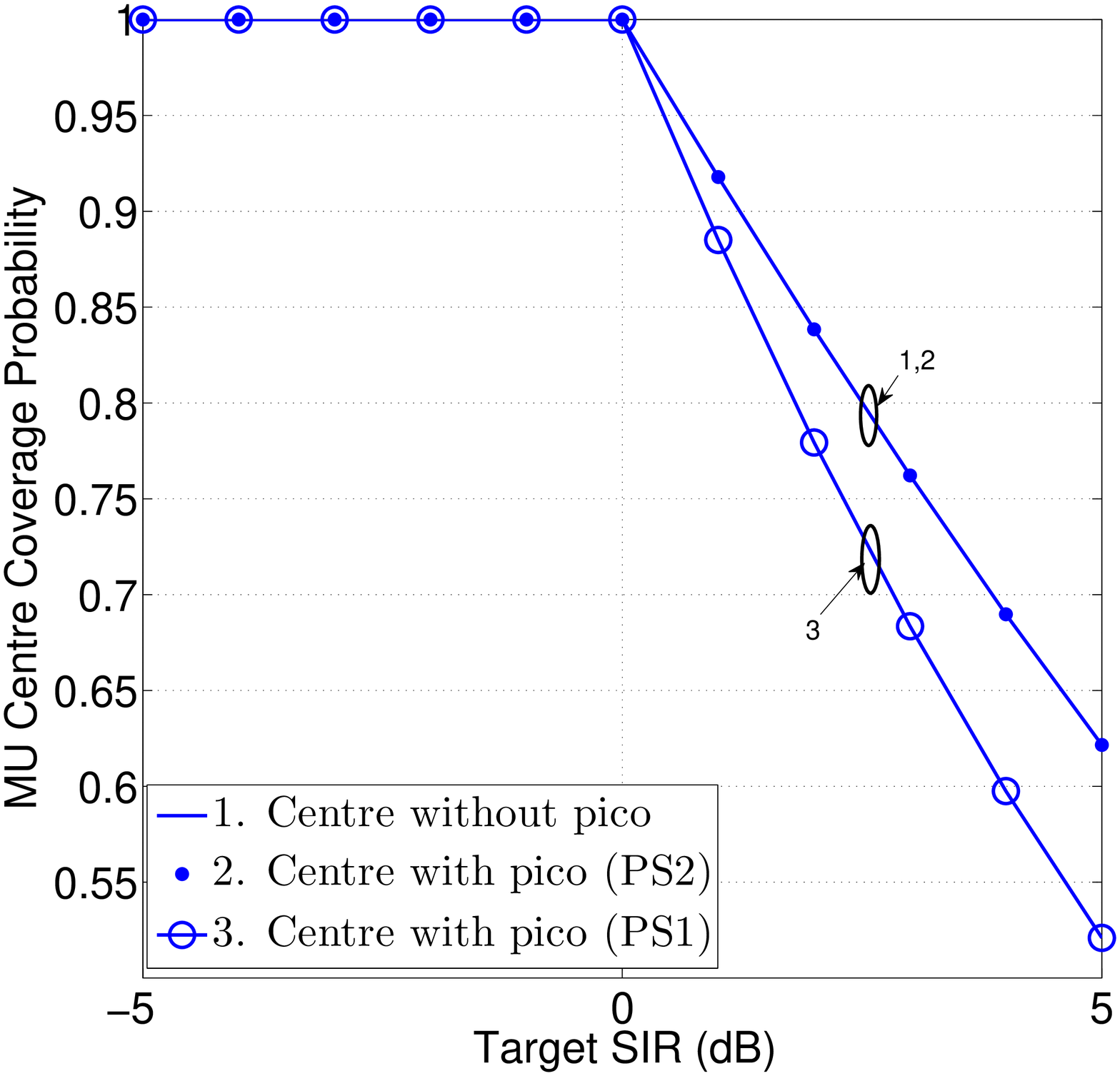}
 \caption{CP of cell-centre MUs where picocell uses different proposed schemes. Here $S_{tm}=0$dB and $\alpha=3.2$.}
 \label{fig:fig5}
\end{figure}
We will now study the impact of the proposed schemes on the CP of the MUs. Fig. \ref{fig:fig4} shows the CP of cell-edge MUs for three cases: $(i)$ when there is no picocell, $(ii)$ when picocell uses PS$1$, and $(iii)$ when picocell use PS$2$. Here, the CP of cell-edge MU when picocell uses PS$1$ is plotted using simulation. It can be observed that the CP of cell-edge MU is lowest when PS$2$ is used in the picocell. The CP of cell-edge MU when there is no picocell is highest followed by when PS$1$ is used in the picocell.  Fig. \ref{fig:fig5} show the impact of proposed schemes on the centre CP. The CP of cell-centre degrades when picocell uses PS$1$. However, there is no impact of PS$2$ on the CP of the cell-centre MU, since PS$2$ does not use cell-centre resources at the picocells. Fig. \ref{fig:fig6} plots the normalized average rate with respect to SIR threshold of macrocell. It can be seen that as $S_{tm}$ increases, centre frequency of macrocell ($N_c$) decreases, i.e., the sub-bands of picocell for PS$1$ decreases and the sub-bands of picocell for PS$2$ increases. Hence, the normalized average rate of PS$1$ decreases and  the normalized average rate of PS$2$ increases.

Based on the above results, we see that in FFR deployments, both PS$1$ and PS$2$ have an advantage over the conventional scheme, but neither of them is uniformly better than the other. Then, the question arises which of these two schemes should the operator use in the picocell? We provide following guidelines that could help this choice:
\begin{enumerate}
\item  Depending on the SIR threshold chosen for the macrocell, PS$1$ or PS$2$ can be selected.
\item  The proposed schemes allocation could also depend on the MUs around the picocell. If cell-edge MUs around one particular picocell are higher then that picocell should use the PS$2$ rather PS$1$ and vice versa. 
\item  It could also depend on the location of picocell with respect to nearest macrocell. It is seen by simulation that as picocell location  moves towards the edge of nearest macrocell, the gap between the CP of PU corresponding to PS$1$ and PS$2$ increases. In other words, the CP of PU corresponding to PS$2$ is much higher than the   CP of PU corresponding to PS$1$ when picocell location is at the edge of macrocell. Hence, depending on the location of picocell, proposed scheme selection can be made.
\end{enumerate}
\begin{figure}[ht]
 \centering
 \includegraphics[scale=0.33]{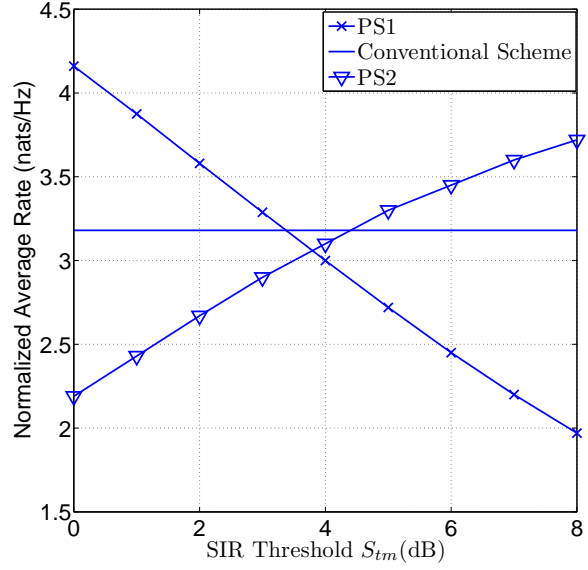}
 \caption{Normalized average rate of PU. Here $\alpha=4$.}
 \label{fig:fig6}
\end{figure}

\section{Frequency Allocation for PU When MBS Employ SFR}
In this section, we first derive the CP for PU when MBS employ SFR. In order to calculate the CP of PU when MBS uses SFR, we divide the interference due to MBSs into two parts: interference due to the nearest MBS,  and the interference due to all the remaining MBSs. The effective interference power due to all remaining MBSs  is denoted by $\eta P_m$, where as in  \cite{6047548}, $\eta=\frac{\delta-1+\beta}{\delta}$. 
 \begin{theorem}
The CP of a typical PU,  using the centre frequency resources band $F_2$ or $F_3$ of the SFR for an interference limited scenario is given by
\begin{equation}
\textstyle
 P_{s,c}(T)=\int_{r>0}^{R}\frac{2r}{R^2}\exp\{-\pi r^2 (T )^{2/\alpha}C(L_p(r),\alpha)\lambda_p \} \text{ } \mathbb{E}_q\left[\frac{P_p}{P_p+Tr^\alpha P_m {q}^{-\alpha}}e^{\{-\pi r^2 \left(\frac{\eta T P_m}{P_p}\right)^{2/\alpha}C(L_c(r),\alpha)\lambda_m} \}\right]\mathrm{d}r \label{eq:SFR}
\end{equation}
\end{theorem}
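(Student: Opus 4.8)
The plan is to compute the conditional coverage probability $P[\eta_p(r)>T\mid r]$ for a PU held at distance $r$ from its serving PBS and then average it against the density $\frac{2r}{R^2}$, exactly as in Lemma $1$ and Lemma $2$. Since the desired-signal fading $g_p$ is $\exp(1)$, I would first rewrite $\{\eta_p(r)>T\}$ as $\{g_p > \frac{Tr^\alpha}{P_p}(I_\phi + I_{\psi\setminus P_0})\}$ and take the expectation over $g_p$, which turns the conditional CP into the Laplace transform of the total interference evaluated at $s=\frac{Tr^\alpha}{P_p}$, namely $\mathbb{E}[\exp(-s(I_\phi+I_{\psi\setminus P_0}))]$.

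The crucial modelling step, specific to SFR, is to split the macro interference $I_\phi$ into the contribution of the geographically nearest MBS (the reference macrocell, at distance $q$) and the contribution of all remaining MBSs. Because the PU is placed on band $F_2/F_3$, which is precisely the centre band of the reference cell, the nearest MBS transmits on it at the centre power $P_m$, whereas every other MBS uses this band as a centre band with probability $\frac{\delta-1}{\delta}$ and as a boosted edge band with probability $\frac{1}{\delta}$; following \cite{6047548} I would replace this random per-cell power by its effective value $\eta P_m$ with $\eta=\frac{\delta-1+\beta}{\delta}$. Using the independence of $\phi$, $\psi$, and of the fading across nodes, the Laplace transform then factors into three pieces that I would evaluate separately.

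For the picocell term $\mathcal{L}_{I_{\psi\setminus P_0}}(s)$ the computation is identical to Lemma $1$: the PGFL of the PBS PPP (interferers at distance $\geq R+R_g^{(p)}-r$), followed by the change of variable $t=(sP_p)^{-1/\alpha}x$, yields $\exp\{-\pi r^2 T^{2/\alpha}C(L_p(r),\alpha)\lambda_p\}$. For the nearest MBS, a single $\exp(1)$ interferer at distance $q$ with power $P_m$ gives $\mathbb{E}_h[\exp(-sP_m q^{-\alpha}h)] = \frac{1}{1+sP_mq^{-\alpha}} = \frac{P_p}{P_p+Tr^\alpha P_m q^{-\alpha}}$, which is exactly the rational factor in \eqref{eq:SFR}. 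For the remaining MBSs I would condition on $q$: by Slivnyak's theorem the other points form a PPP of density $\lambda_m$ on $B(o,q)^c$, so its PGFL with effective power $\eta P_m$, after the substitution $t=\left(\frac{\eta T P_m}{P_p}\right)^{-1/\alpha}\frac{x}{r}$, becomes $\exp\{-\pi r^2\left(\frac{\eta T P_m}{P_p}\right)^{2/\alpha}C(L_c(q,r),\alpha)\lambda_m\}$ with $L_c(q,r)=\frac{q}{r}\left(\frac{\eta T P_m}{P_p}\right)^{-1/\alpha}$, matching the exponential factor of \eqref{eq:SFR}.

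Finally, since the picocell term is free of $q$ while the two macro factors depend on $q$ through their distance argument, I would pull the picocell exponential out, take the expectation $\mathbb{E}_q[\cdot]$ of the product of the nearest-MBS rational factor and the remaining-MBS exponential against the void-probability density $f_q$, and integrate over $r$ with weight $\frac{2r}{R^2}$, which reproduces \eqref{eq:SFR}. The main obstacle is the second step: justifying the replacement of the random centre/edge transmit power of the interfering MBSs by the single effective value $\eta P_m$ inside the nonlinear Laplace-transform integrand (the exact marked-PPP expectation of $\frac{s\,\text{power}\,x^{-\alpha}}{1+s\,\text{power}\,x^{-\alpha}}$ does not equal the same expression evaluated at the mean power), together with keeping the nearest-MBS treatment consistent with the Palm conditioning used to derive $f_q$, so that the remaining interferers correctly integrate from $q$ rather than from the guard radius.
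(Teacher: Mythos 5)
Your proposal follows essentially the same route as the paper's proof of Lemma 3: condition on $r$, use $g_p\sim\exp(1)$ to turn the conditional CP into a product of three Laplace-transform factors (interfering PBSs, nearest MBS at distance $q$, remaining MBSs at effective power $\eta P_m$), evaluate each exactly as you describe, and average over $q$ and $r$. The one caveat you raise---that replacing the random centre/edge powers of the non-nearest MBSs by the single effective value $\eta P_m$ inside the PGFL is an approximation rather than an exact step---is valid, but it is precisely the modelling assumption the paper adopts from \cite{6047548} without further justification, so your argument matches theirs.
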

\begin{proof}
CP of PU can be written as,
\begin{equation*}
 P_{s,c}=\int_{r>0}^{R}\mathbb{P}[\eta_p(r)>T|r]f_r(r)\text{d}r,
\end{equation*}
\begin{equation*}
 =\int_{r>0}^{R}\frac{2r}{R^2} \mathbb{P}\left[\frac{g_p r^{-\alpha}}{\eta \frac{P_m}{P_p}I_{\phi\setminus M_0} +I_{\psi\setminus P_0} + \frac{P_m}{P_p}{q}^{-\alpha}{h_0}}>T|r\right]\text{d}r, 
\end{equation*}
\begin{equation}
=\int_{r>0}^{R}\frac{2r}{R^2}\mathcal{L}_{I_{\phi\setminus M_0}}(T r^\alpha \eta \frac{P_m}{P_p})\mathcal{L}_{I_{\psi\setminus P_0}}(Tr^\alpha) \exp\{-Tr^\alpha \frac{P_m}{P_p} {q}^{-\alpha}{h_0}\}\text{d}r.
\label{eq:six}
\end{equation}
Here the first term $\mathcal{L}_{I_{\phi\setminus M_0}}(T r^\alpha \eta \frac{P_m}{P_p})$ is due to interference from all the MBS except the nearest MBS, and the second term $\mathcal{L}_{I_{\psi\setminus P_0}}(Tr^\alpha)$ is due to interference from all the picocell except the serving picocell, and third term $\exp\{-Tr^\alpha \frac{P_m}{P_p} {q}^{-\alpha}{h_0}\}$ is due to interference from the nearest MBS. Where, $h_0$ is the fading power gain from the nearest MBS and $h_0\sim \exp(1)$ hence one  obtains,
\begin{equation}
\mathbb{E}_{h_0}[\exp\{-Tr^\alpha\frac{P_m}{P_p} {q}^{-\alpha}{h_0}\}]=\frac{P_p}{P_p+Tr^\alpha P_m {q}^{-\alpha}}
\end{equation}
$\mathcal{L}_{I_{\phi\setminus M_0}}(s)$  is given in \eqref{eq:seven} and $\mathcal{L}_{I_{\psi\setminus P_0}}(Tr^\alpha)$ has been derived in the Appendix (please see Eq. \eqref{eq:lemma2}). Hence using \eqref{eq:seven} and \eqref{eq:lemma2}, Eq. \eqref{eq:six} can be simplified to obtain \eqref{eq:SFR}.
\end{proof}
Now, the case when PU uses the edge frequency resources band $F_1$ of the SFR will be analysed. The CP of PU in this case  directly follows from an application of Lemma $3$ except for the fact that the edge frequency uses $\beta$ times higher power than the centre frequency. Therefore, interference due to the  nearest MBS will be $ (1+\beta Tr^\alpha \frac{P_m}{P_p} {q}^{-\alpha})^{-1}$ instead of $ (1+Tr^\alpha \frac{P_m}{P_p} {q}^{-\alpha})^{-1}$. Hence, the  CP of a typical PU,  using the edge frequency resources band $F_1$ of the SFR for an interference limited scenario is given by an expression identical to \eqref{eq:SFR}, but with  $ (1+Tr^\alpha \frac{P_m}{P_p} {q}^{-\alpha})^{-1}$ within the expectation replaced by $ (1+\beta Tr^\alpha \frac{P_m}{P_p} {q}^{-\alpha})^{-1}$.

The impact of SFR on the CP of PU is plotted in Fig. \ref{fig:fig2}. Here again we take $\delta=3$, $\alpha=3.2$, and $R=200$ meters. This result shows that the edge frequency resources give the lowest CP among all the frequency resources in the SFR scheme.  Centre frequency resources provide a higher coverage than the edge frequency resources because of the fact that they use  $\beta$ times less power for the nearest MBS. Now, as we increase  $\beta$, the power on the centre frequency resources  decreases with respect to edge frequency resources, and this results in a higher coverage. Hence, centre frequency resources in SFR should only be used by the picocell.

\begin{figure}[ht]
 \centering
 \includegraphics[scale=0.33]{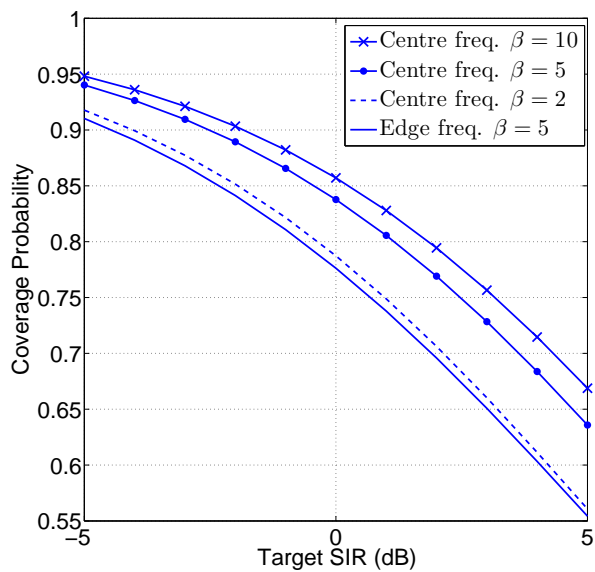}
 \caption{CP of pico user for SFR deployment in macrocell.}
 \label{fig:fig4}
\end{figure}

\section{Conclusions}
This work considered FFR and SFR deployments for the macrocell, and  provided  the CP and rate of a PU by assuming a fixed size picocell. We  proposed two schemes for the picocell when the macrocell employs FFR. The CP and rate of proposed schemes were derived, and it was shown that the proposed schemes outperforms the conventional scheme. The impact of the proposed schemes on the CP of MU was also discussed. Finally, the spectrum allocation in picocell was briefly discussed when the macrocell employed SFR. Future work could consider the effect of range expansion (or picocell strength bias) \cite{6166483}, on the performance of the proposed schemes. Further extension of this work could include the effect of correlation among the interferers \cite{kumar2014analysis}. Another important extension of this work is to analyse the performance  of the proposed schemes  in an uplink cellular network \cite{kumar2014power}.
\appendix
\label{app}
\section*{Proof of Lemma 1}
Given that the typical PU is at distance $r$ from the PBS, the CP averaged over the picocell area is given by
\begin{equation*}
P_{f,c}(T)=\mathbb{E}_r[\mathbb{P}[\text{SIR}>T|r]]=\int_{r>0}^{R}\mathbb{P}[\eta_p(r)>T|r]f_r(r)\text{d}r,
\end{equation*}
\begin{equation*}
=\int_{r>0}^{R}\frac{2r}{R^2} \mathbb{P}\left[\frac{g_p r^{-\alpha}}{ \frac{P_m}{P_p}I_\phi +I_{\psi\setminus P_0}}>T|r\right]\text{d}r,
\end{equation*}
\begin{eqnarray}
&=&\int_{r>0}^{R}\frac{2r}{R^2} \mathbb{P}[g_m>Tr^\alpha(\frac{P_m}{P_p}I_\phi +I_{\psi\setminus P_0})|r]\text{d}r, \nonumber\\
&\stackrel{(a)}=&\int_{r>0}^{R}\frac{2r}{R^2}\mathcal{L_{I_\phi}}(Tr^\alpha \frac{P_m}{P_p})\mathcal{L}_{I_{\psi\setminus P_0}}(Tr^\alpha)\text{d}r,
\label{eq:lemma1}
\end{eqnarray}
 where $ I_{\phi}=\sum\limits_{i\in \phi}{d_i}^{-\alpha}{h_i},$ $ I_{\psi\setminus P_0}=\sum\limits_{j\in (\psi \setminus P_0)}{r_j}^{-\alpha}{g_j}$.  
Here $(a)$ follows from the fact that $g_m \sim \exp(1)$. $\mathcal{L_{I_\phi}}(s)$ and $\mathcal{L}_{I_{\psi\setminus P_0}}(s)$ are the 
Laplace transforms of the random variables $I_\phi$ and $I_\psi$, respectively, evaluated at $s$. Thus,
\begin{eqnarray}
 &&\mathcal{L}_{I_{\psi\setminus P_0}}(s)=\mathbb{E}_{{\psi\setminus P_0},g_j}[\exp(-s\sum_{j \in \psi \setminus P_0} g_j r_j^{-\alpha})] \nonumber\\
&\stackrel{(b)}=&\exp\left(-2\pi\lambda_p\int_{R+R_g^{(p)}-r}^\infty \frac{sx^{-\alpha}}{1+sx^{-\alpha}}x\text{d}x\right )\label{eq:upper}\\  
&\stackrel{(c)}=&\exp\left(-2\pi \lambda_ps^{\frac{2}{\alpha}}\int_{L_p(r)}^\infty \frac{t}{1+t^{\alpha}}\text{d}t\right)\nonumber\\
&\stackrel{(d)}=& \exp\{-\pi s^{2/\alpha}C(L_p(r),\alpha)\lambda_p \} \label{eq:lemma2}
\end{eqnarray}
\begin{equation}
L_p(r)= \frac{R+R_g^{(p)}-r}{r}(T)^{-\frac{1}{\alpha}}, \text { and }C(L_p(r),\alpha)=\frac{{}_2 F_1(1,\frac{\alpha-2}{\alpha},2-\frac{2}{\alpha},-\{L_p(r)\}^\alpha)}{\alpha-2}\{L_p(r)\}^{2-\alpha} .
\label{hyper}
\end{equation}
Here, $(b)$ follows from the fact that $g_j\sim \exp(1)$ and probability generating functional (PGFL) \cite{Stoyan} of the PPP and ${}_2 F_1(a,b,c,z)$ represents the Gauss hypergeometric function. The lower limit of the 
integral in $(b)$ is $R+R_g^{(p)}-r$ because of the fact that all the interfering base stations are at least a distance greater than  $R+R_g^{(p)}-r$. Using a change of variable $t=s^{-\frac{1}{\alpha}}x$, $(c)$ can be followed  and $(d)$ follows after some algebraic manipulation.   In a similar fashion,
\begin{equation}
 \mathcal{L}_{I_\phi}(s)=\exp\{-\pi s^{2/\alpha}C(L_m(r),\alpha)\lambda_m \}.
\label{eq:lemma3}
\end{equation}
where $L_m(r)=\frac{R+R_g^{(m)}-r}{r}\left(T\frac{P_m}{P_p}\right)^{-\frac{1}{\alpha}}$. Using \eqref{eq:lemma2} and \eqref{eq:lemma3} and simplifying \eqref{eq:lemma1}, one obtains
\begin{equation*}
P_{f,c}(T)= \int_{r>0}^{R}\exp(-\pi r^2 T^{2/\alpha}K(r)) \frac{2r}{R^2}\mathrm{d}r,
\end{equation*}
\begin{equation*}
\text{where } K(r)=\left(\frac{P_m}{P_p}\right)^{2/\alpha}C(L_m(r),\alpha)\lambda_m + C(L_p(r),\alpha)\lambda_p.
\end{equation*}
\bibliographystyle{IEEEtran}
\bibliography{bibfile}

\begin{thebibliography}{10}
\providecommand{\url}[1]{#1}
\csname url@samestyle\endcsname
\providecommand{\newblock}{\relax}
\providecommand{\bibinfo}[2]{#2}
\providecommand{\BIBentrySTDinterwordspacing}{\spaceskip=0pt\relax}
\providecommand{\BIBentryALTinterwordstretchfactor}{4}
\providecommand{\BIBentryALTinterwordspacing}{\spaceskip=\fontdimen2\font plus
\BIBentryALTinterwordstretchfactor\fontdimen3\font minus
  \fontdimen4\font\relax}
\providecommand{\BIBforeignlanguage}[2]{{%
\expandafter\ifx\csname l@#1\endcsname\relax
\typeout{** WARNING: IEEEtran.bst: No hyphenation pattern has been}%
\typeout{** loaded for the language `#1'. Using the pattern for}%
\typeout{** the default language instead.}%
\else
\language=\csname l@#1\endcsname
\fi
#2}}
\providecommand{\BIBdecl}{\relax}
\BIBdecl

\bibitem{1184120}
S.~Kishore, L.~Greenstein, H.~Poor, and S.~Schwartz, ``{Uplink User Capacity in
  a CDMA Macrocell with a Hotspot Microcell: Exact and Approximate Analyses},''
  \emph{IEEE Transactions on Wireless Communications}, vol.~2, no.~2, pp.
  364--374, 2003.

\bibitem{1308951}
X.~Wu, B.~Murherjee, and D.~Ghosal, ``{Hierarchical Architectures in the
  Third-Generation Cellular Network},'' \emph{IEEE Wireless Communications},
  vol.~11, no.~3, pp. 62--71, 2004.

\bibitem{5534591}
{N. Himayat, S. Talwar, A. Rao and R. Soni}, ``{Interference Management for 4G
  Cellular Standards [WIMAX/LTE UPDATE]},'' \emph{{IEEE Communications
  Magazine}}, vol.~48, no.~8, pp. 86--92, August 2010.

\bibitem{6542983}
D.~Qin, W.~Xu, and Z.~Ding, ``{Power Control and Resource Allocation for
  Capacity Improvement in Picocell Downlinks},'' in \emph{International
  Conference on Wireless Communications Signal Processing (WCSP), 2012}, 2012,
  pp. 1--6.

\bibitem{6362538}
H.~Shimodaira, G.~Tran, S.~Tajima, K.~Sakaguchi, K.~Araki, N.~Miyazaki,
  S.~Kaneko, S.~Konishi, and Y.~Kishi, ``{Optimization of Picocell Locations
  and its Parameters in Heterogeneous Networks with Hotspots},'' in \emph{IEEE
  23rd International Symposium on Personal Indoor and Mobile Radio
  Communications (PIMRC), 2012}, 2012, pp. 124--129.

\bibitem{6092863}
I.~Guvenc, M.-R. Jeong, I.~Demirdogen, B.~Kecicioglu, and F.~Watanabe, ``{Range
  Expansion and Inter-Cell Interference Coordination (ICIC) for Picocell
  Networks},'' in \emph{IEEE Vehicular Technology Conference (VTC Fall), 2011},
  2011, pp. 1--6.

\bibitem{6398987}
Y.~Wang, Y.~Chang, and D.~Yang, ``{An Efficient Inter-Cell Interference
  Coordination Scheme in Heterogeneous Cellular Networks},'' in \emph{IEEE
  Vehicular Technology Conference (VTC Fall), 2012}, 2012, pp. 1--5.

\bibitem{6666396}
M.~Eguizabal and A.~Hernandez, ``{Interference Management and Cell Range
  Expansion Analysis for LTE Picocell Deployments},'' in \emph{IEEE 24th
  International Symposium on Personal Indoor and Mobile Radio Communications
  (PIMRC), 2013}, 2013, pp. 1592--1597.

\bibitem{6476610}
F.~Jin, R.~Zhang, and L.~Hanzo, ``{Fractional Frequency Reuse Aided Twin-Layer
  Femtocell Networks: Analysis, Design and Optimization},'' \emph{IEEE
  Transactions on Communications}, vol.~61, no.~5, pp. 2074--2085, 2013.

\bibitem{suman}
{S. Kumar and K. Giridhar}, ``{Analytical Derivation of Reuse Pattern for Soft
  Frequency Reuse Based Femtocell Deployment},'' in \emph{{15th International
  Symposium on Wireless Personal Multimedia Communications (WPMC)}}, Sept.
  2012, pp. 569--573.

\bibitem{5734795}
{J. Y. Lee, S. J. Bae, Y. M. Kwon and M. Y. Chung}, ``{Interference Analysis
  for Femtocell Deployment in OFDMA Systems Based on Fractional Frequency
  Reuse},'' \emph{{IEEE Communications Letters}}, vol.~15, no.~4, pp. 425--427,
  April 2011.

\bibitem{6171996}
{H. S. Dhillon, R. K. Ganti, F. Baccelli and J. G. Andrews}, ``{Modeling and
  Analysis of K-Tier Downlink Heterogeneous Cellular Networks},'' \emph{{IEEE
  Journal on Selected Areas in Communications}}, vol.~30, no.~3, pp. 550--560,
  April 2012.

\bibitem{6220221}
{T. D. Novlan, R. K. Ganti, A. Ghosh and J. G. Andrews}, ``{Analytical
  Evaluation of Fractional Frequency Reuse for Heterogeneous Cellular
  Networks},'' \emph{IEEE Transactions on Communications}, vol.~60, no.~7, pp.
  2029--2039, 2012.

\bibitem{6171998}
S.~Mukherjee, ``{Distribution of Downlink SINR in Heterogeneous Cellular
  Networks},'' \emph{Selected Areas in Communications, IEEE Journal on},
  vol.~30, no.~3, pp. 575--585, 2012.

\bibitem{6171997}
{W. C. Cheung, T. Q. S. Quek and M. Kountouris}, ``{Throughput Optimization,
  Spectrum Allocation, and Access Control in Two-Tier Femtocell Networks},''
  \emph{IEEE Journal on Selected Areas in Communications}, vol.~30, no.~3, pp.
  561--574, April 2012.

\bibitem{robert}
R.~Heath, M.~Kountouris, and T.~Bai, ``{Modeling Heterogeneous Network
  Interference Using Poisson Point Processes},'' \emph{Signal Processing, IEEE
  Transactions on}, vol.~61, no.~16, pp. 4114--4126, 2013.

\bibitem{6042301}
{J. G. Andrews, F. Baccelli and R. K. Ganti}, ``{A Tractable Approach to
  Coverage and Rate in Cellular Networks},'' \emph{{IEEE Transactions on
  Communications}}, vol.~59, no.~11, pp. 3122--3134, November 2011.

\bibitem{6047548}
T.~Novlan, R.~Ganti, A.~Ghosh, and J.~Andrews, ``{Analytical Evaluation of
  Fractional Frequency Reuse for OFDMA Cellular Networks},'' \emph{IEEE
  Transactions on Wireless Communications,}, vol.~10, no.~12, pp. 4294--4305,
  2011.

\bibitem{6166483}
D.~Lopez-Perez, X.~Chu, and I.~Guvenc, ``{On the Expanded Region of Picocells
  in Heterogeneous Networks},'' \emph{IEEE Journal of Selected Topics in Signal
  Processing}, vol.~6, no.~3, pp. 281--294, 2012.

\bibitem{kumar2014analysis}
S.~Kumar and S.~Kalyani, ``Analysis and comparison of coverage probability in
  the presence of correlated nakagami-m interferers and non-identical
  independent nakagami-m interferers,'' \emph{arXiv preprint arXiv:1401.4663},
  2014.

\bibitem{kumar2014power}
S.~Kumar and K.~Giridhar, ``Power control factor selection in uplink ofdma
  cellular networks,'' \emph{arXiv preprint arXiv:1401.4657}, 2014.

\bibitem{kumar2014}
S.~Kumar, S.~Kalyani, and K.~Giridhar, ``Optimal thresholds for coverage and
  rate in ffr schemes for planned cellular networks,'' \emph{arXiv preprint
  arXiv:1401.4662}, 2014.

\bibitem{Stoyan}
{D. Stoyan, W. Kendall and J. Mecke}, \emph{Stochastic Geometry and Its
  Application}, 2nd~ed.\hskip 1em plus 0.5em minus 0.4em\relax {John Wiley and
  Sons}, 1996.

\end{thebibliography}

\end{document}